\DeclareMathOperator{\E}{\mathbb{E}}
\DeclareMathOperator{\is}{IS}
\DeclareMathOperator{\es}{EM}
\DeclareMathOperator{\Pb}{Pr}
\DeclareMathOperator{\mf}{\enspace .}
\DeclareMathOperator{\mc}{\enspace ,}
\newcommand{\deq}{\triangleq}
\newcommand{\reals}{\mathbb{R}}
\newcommand{\ints}{\mathbb{Z}}
\newtheorem{theorem}{Theorem}
\newtheorem{corollary}{Corollary}
\newtheorem{proposition}{Proposition}
\newtheorem{lemma}{Lemma}
\newtheorem{example}{Example}
\theoremstyle{definition}
\newtheorem{remark}{Remark}
\theoremstyle{remark}
\title{Processor Allocation for Optimistic Parallelization of Irregular
  Programs\footnote{The original publication is available at
    www.springerlink.com \cite{VersaciP12}}}
\author{Francesco Versaci\thanks{Contact author. Email:
    \texttt{versaci@par.tuwien.ac.at}. This works was partially supported by
    PAT-INFN Project \emph{AuroraScience}, by MIUR-PRIN Project \emph{AlgoDEEP},
    and by the University of Padova Projects \emph{STPD08JA32} and
    \emph{CPDA099949}} \\{\small TU Wien \& University of Padova} \and Keshav
  Pingali\thanks{\texttt{pingali@cs.utexas.edu}}\\{\small University of Texas at
    Austin}}
\date{}
\begin{document}
\maketitle

\begin{abstract}
  Optimistic parallelization is a promising approach for the parallelization of
  irregular algorithms: potentially interfering tasks are launched dynamically,
  and the runtime system detects conflicts between concurrent activities,
  aborting and rolling back conflicting tasks.
  However, parallelism in irregular algorithms is very complex. In a regular
  algorithm like dense matrix multiplication, the amount of parallelism can
  usually be expressed as a function of the problem size, so it is reasonably
  straightforward to determine how many processors should be allocated to
  execute a regular algorithm of a certain size (this is called the processor
  allocation problem). In contrast, parallelism in irregular algorithms can be a
  function of input parameters, and the amount of parallelism can vary
  dramatically during the execution of the irregular algorithm. Therefore, the
  processor allocation problem for irregular algorithms is very difficult.

  In this paper, we describe the first systematic strategy for addressing this
  problem. Our approach is based on a construct called the \emph{conflict
    graph}, which (i) provides insight into the amount of parallelism that can
  be extracted from an irregular algorithm, and (ii) can be used to address the
  processor allocation problem for irregular algorithms. We show that this
  problem is related to a generalization of the \emph{unfriendly seating
    problem} and, by extending Turán's theorem, we obtain a worst-case class of
  problems for optimistic parallelization, which we use to derive a lower bound
  on the exploitable parallelism.
  Finally, using some theoretically derived properties and some experimental
  facts, we design a quick and stable control strategy for solving the processor
  allocation problem heuristically.
\end{abstract}

\vspace{5mm} \noindent{\bf Keywords}: Irregular algorithms, Optimistic
parallelization, Automatic parallelization, Amorphous data-parallelism,
Processor allocation, Unfriendly seating, Turán's theorem.

\pagebreak[4]

\section{Introduction}
\label{sec:an-intr-optim}
The advent of on-chip multiprocessors has made parallel programming a mainstream
concern. Unfortunately writing correct and efficient parallel programs is a
challenging task for the average programmer. Hence, in recent years, many
projects~\cite{KaleK93,FrigoLR98,AnJRSSTTAR01,Reinders07} have tried to automate
parallel programming for some classes of algorithms. Most of them focus on
\emph{regular} algorithms such as Fourier
transforms~\cite{FrigoJ05,PuschelMJ++05} and dense linear algebra
routines~\cite{scalapack97}. Automation is more difficult when the algorithms
are \emph{irregular} and use pointer-based data structures such as graphs and
sets.
One promising approach is based on the concept of \emph{amorphous data
  parallelism}~\cite{Mendez-LojoNPSHKBP10}.  Algorithms are formulated as
iterative computations on \emph{work-sets}, and each iteration is identified as
a quantum of work (task) that can potentially be executed in parallel with other
iterations. The Galois project \cite{GaloisPLDI11} has shown that algorithms
formulated in this way can be parallelized automatically using {\em optimistic
  parallelization}): iterations are executed speculatively in parallel and, when
an iteration conflicts with concurrently executing iterations, it is
rolled-back. Algorithms that have been successfully parallelized in this manner
include Survey propagation~\cite{BraunsteinMZ05}, Boruvka's
algorithm~\cite{Eppstein2000}, Delauney triangulation and
refinement~\cite{GuibasKS92}, and Agglomerative clustering~\cite{TanSK2005}.

In a regular algorithm like dense matrix multiplication, the amount of
parallelism can usually be expressed as a function of the problem size, so it is
reasonably straightforward to determine how many processors should be allocated
to execute a regular algorithm of a certain size (this is called the
\emph{processor allocation} problem). In contrast, parallelism in irregular
algorithms can be a function of input parameters, and the amount of parallelism
can vary dramatically during the execution of the irregular
algorithm~\cite{DBLP:conf/ppopp/KulkarniBIPC09}. Therefore, the processor
allocation problem for irregular algorithms is very difficult. Optimistic
parallelization complicates this problem even more: if there are too many
processors and too little parallel work, not only might some processors be idle
but speculative conflicts may actually retard the progress of even those
processors that have useful work to do, increasing both program execution time
and power consumption. \emph{This paper\footnotemark presents the first
  systematic approach to addressing the processor allocation problem for
  irregular algorithms under optimistic parallelization}, and it makes the
following contributions.

\footnotetext{A brief announcement of this work has been presented at SPAA'11
  \cite{VersaciP11}}

\begin{itemize}
\item We develop a simple graph-theoretic model for optimistic parallelization
  and use it to formulate processor allocation as an optimization problem that
  balances parallelism exploitation with minimizing speculative conflicts
  (Section~\ref{sec:model-optim-parall}).
\item We identify a worst-case class of problems for optimistic parallelization;
  to this purpose, we develop an extension of Turán's theorem~\cite{AlonS00}
  (Section~\ref{sec:expl-parall}).
\item Using these ideas, we develop an adaptive controller that dynamically
  solves the processor allocation problem for amorphous data-parallel programs,
  providing rapid response to changes in the amount of amorphous
  data-parallelism (Section~\ref{sec:contr-parall}).
\end{itemize}

\section{Modeling Optimistic Parallelization}
\label{sec:model-optim-parall}

A typical example of an algorithm that exhibits amorphous data-parallelism is
Dalauney mesh refinement, summarized as follows. A triangulation of some planar
region is given, containing some ``bad'' triangles (according to some quality
criterion). To remove them, each bad triangle is selected (in any arbitrary
order), and this triangle, together with triangles that lie in its {\em cavity},
are replaced with new triangles. The retriangulation can produce new bad
triangles, but this process can be proved to halt after a finite number of
steps. Two bad triangles can be processed in parallel, given that their cavities
do not overlap.

There are also algorithms, which exhibit amorphous data-parallelism, for which
the order of execution of the parallel tasks cannot be arbitrary, but must
satisfy some constraints (e.g., in discrete event simulations the events must
commit chronologically). We will not treat this class of problems in this work,
but we will focus only on \emph{unordered}
algorithms~\cite{DBLP:conf/ppopp/KulkarniBIPC09}.
A different context in which there is no roll-back and tasks do not conflict,
but obey some precedence relations, is treated in \cite{AgrawalEtal08}.

Optimistic parallelization deals with amorphous data-parallelism by maintaining
a work-set of the tasks to be executed. At each temporal step some tasks are
selected and speculatively launched in parallel. If, at runtime, two processes
modify the same data a conflict is detected and one of the two has to abort and
roll-back its execution.
Neglecting the details of the various amorphous data-parallel algorithms, we can
model their common behavior at a higher level with a simple graph-theoretic
model: we can think a scheduler as working on a dynamic graph $G_t=(V_t,E_t)$, where
the nodes represent computations we want to do, but we have no initial knowledge
of the edges, which represent conflicts between computations (see
Fig.~\ref{fig:optimistic-parallel}). At time step $t$ the system picks uniformly
at random $m_t$ nodes (the \emph{active} nodes) and tries to process them
concurrently. When it processes a node it figures out if it has some connections
with other executed nodes and, if a neighbor node happens to have been processed
before it, aborts, otherwise the node is considered processed, is removed from
the graph and some operations may be performed in the neighborhood, such as
adding new nodes with edges or altering the neighbors. The time taken to process
conflicting and non-conflicting nodes is assumed to be the same, as it happens,
e.g., for Dalauney mesh refinement.

\begin{figure}
  \centering
  \begin{tabular}{cp{5mm}cp{5mm}c}
    \includegraphics[width=35mm]{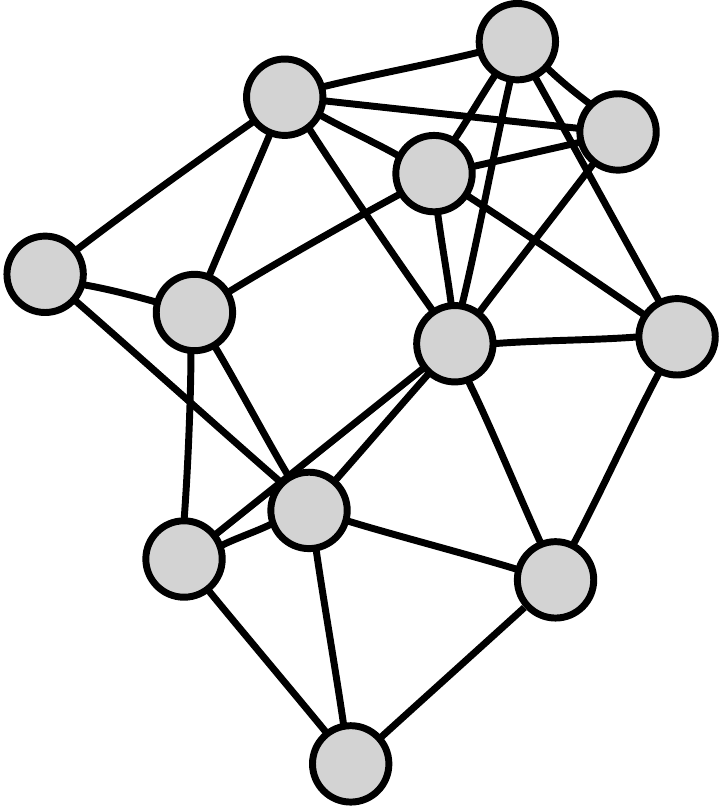} & &
    \includegraphics[width=35mm]{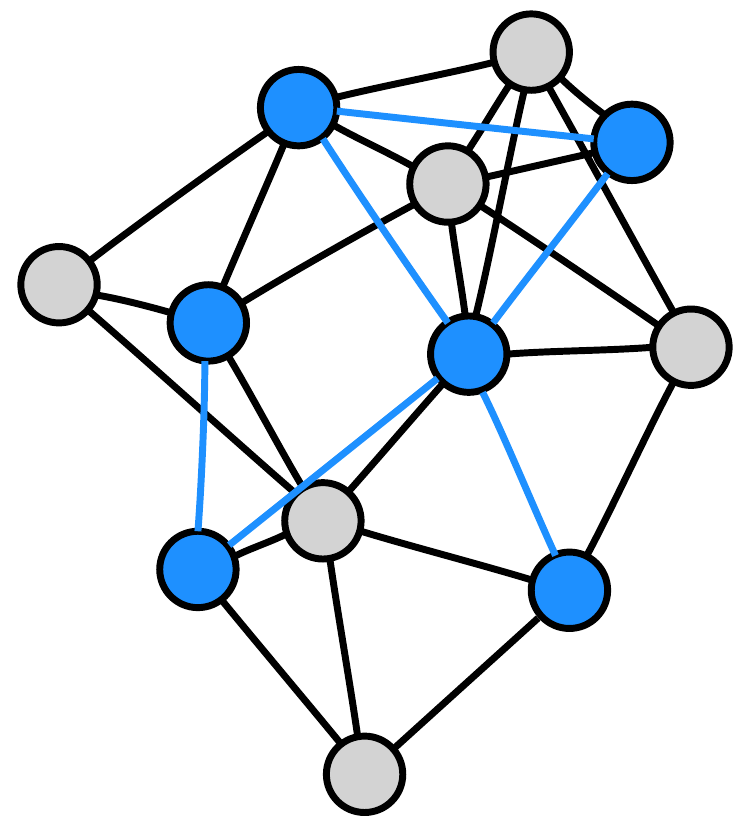} & &
    \includegraphics[width=35mm]{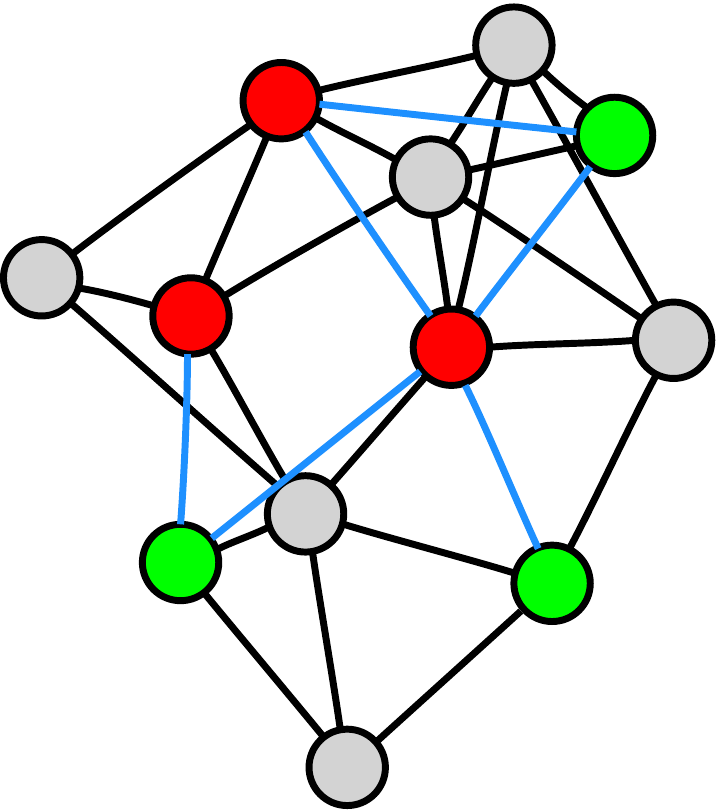} \\[2mm]
    (i) & & (ii) & & (iii)
  \end{tabular}
  \caption{Optimistic parallelization. (i) Nodes represent possible
    computations, edges conflicts between them. (ii) $m$ nodes are chosen at
    random and run concurrently. (iii) At runtime the conflicts are detected,
    some nodes abort and their execution is rolled back, leaving a maximal
    independent set in the subgraph induced by the initial nodes choice.  }
  \label{fig:optimistic-parallel}
\end{figure}

\subsection{Control Optimization Goal}
\label{sec:contr-optim-goal}

When we run an optimistic parallelization we have two contrasting goals: we both
want to maximize the work done, achieving high parallelism, but at the same time
we want to minimize the conflicts, hence obtaining a good use of the processors
time. (Furthermore, for some algorithms the roll-back work can be quite
resource-consuming.)  These two goals are not compatible, in fact if we naïvely
try to minimize the total execution time the system is forced to use always all
the available processors, whereas if we try to minimize the time wasted from
aborted processes the system uses only one processor. Therefore in the following
we choose a trade-off goal and cast it in our graph-theoretic model.

Let $G=(V,E)$ be a computations/conflicts (CC) graph with $n=|V|$ nodes. When a
scheduler chooses, uniformly at random, $m$ nodes to be run, the ordered set
$\pi_m(\cdot)$ by which they commit can be modeled as a random permutation: if
$i<j$ then $\pi_m(i)$ commits before $\pi_m(j)$ (if there is a conflict between
$\pi_m(i)$ and $\pi_m(j)$ then $\pi_m(i)$ commits and $\pi_m(j)$ aborts, if
$\pi_m(i)$ aborted due to conflicts with previous processes $\pi_m(j)$ can
commit, if not conflicting with other committed processes).  Let $k_t(\pi_m)$ be
the number of aborted processes due to conflicts and $r_t(\pi_m)\in \left[0,
  1\right)$ the ratio of conflicting processors observed at time $t$
(i.e. $r_t(\pi_m)\deq k_t(\pi_m)/m$).  We define the \emph{conflict ratio} $\bar
r_t(m)$ to be the expected $r$ that we obtain when the system is run with $m$
processors:
\begin{equation}
  \bar r_t(m)\deq \E_{\pi_m}\left[r_t(\pi_m)\right] \mc
\end{equation}
where the expectation is computed uniformly over the possible prefixes of length
$m$ of the $n$ nodes permutations.  The control problem we want to solve is the
following: \emph{given $r(\tau)$ and $m_\tau$ for $\tau<t$, choose $m_t=\mu_t$
  such that $\bar r_t(\mu_t)\simeq\rho$, where $\rho$ is a suitable parameter}.
\begin{remark}
  If we want to dynamically control the number of processors, $\rho$ must be
  chosen different from zero, otherwise the system converges to use only one
  processor, thus not being able to identify available parallelism. A value of
  $\rho \in [20\%, 30\%]$ is often reasonable, together with the constraint
  $m_t \geq 2$.
\end{remark}

\section{Exploiting Parallelism}
\label{sec:expl-parall}

In this section we study how much parallelism can be extracted from a given CC
graph and how its sparsity can affect the conflict ratio. To this purpose we
obtain a worst case class of graphs and use it to analytically derive a lower
bound for the exploitable parallelism (i.e., an upper bound for the conflict
ratio). We make extensive use of finite differences (i.e., discrete
derivatives), which are defined recursively as follows. Let $f: \ints
\rightarrow \reals$ be a real function defined on the integers, then the $i$-th
(forward) finite difference of $f$ is
\begin{align}
  \Delta^i_f(k) & = \Delta^{i-1}_f(k+1) - \Delta^{i-1}_f(k) \mc & \text{with }
  \Delta^0_f(k)&=f(k) \mf
\end{align}
(In the following we will omit $\Delta$'s superscript when equal to one, i.e.,
$\Delta\deq\Delta^1$.)

First, we obtain two basic properties of $\bar r$, which are given by the
following propositions.
\begin{proposition}\label{thm:r-increasing}
  The conflict ratio function $\bar r(m)$ is non-decreasing in $m$.
\end{proposition}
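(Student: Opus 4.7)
The plan is to reduce $\bar r(m)$ to an average of position-wise commit probabilities and then show that sequence is monotone non-increasing, so that its running average is also monotone.

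First, I would observe that the experiment ``pick $m$ nodes uniformly at random and process them in a uniformly random order'' is equivalent to drawing a uniformly random full permutation $\sigma$ of $V$ and processing only its first $m$ entries. Crucially, whether $\sigma(j)$ commits is determined entirely by $\sigma(1),\ldots,\sigma(j)$ and $G$, so it is the same event regardless of the value of $m$ (as long as $m\ge j$). Setting $q_j\deq \Pb_\sigma[\sigma(j) \text{ commits}]$, linearity of expectation yields
\begin{equation*}
  \bar r(m) \;=\; 1 - \frac{1}{m}\sum_{j=1}^{m} q_j \mf
\end{equation*}
From here, monotonicity of $\bar r(m)$ in $m$ will follow from the arithmetic fact that the running average of a non-increasing sequence is itself non-increasing — which in turn reduces to verifying $q_{m+1}\le \frac{1}{m}\sum_{j=1}^m q_j$, and so to proving $q_{j+1}\le q_j$ for all $j$.

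The heart of the argument, and the step I expect to be the main obstacle, is establishing $q_{j+1}\le q_j$. I would do this by an involution on $S_n$: for each $\sigma$ let $\sigma'$ be the permutation obtained by swapping positions $j$ and $j+1$. This is clearly a measure-preserving bijection. I would then show that whenever $\sigma(j+1)$ commits in $\sigma$, the node $\sigma'(j)=\sigma(j+1)$ commits in $\sigma'$. The verification hinges on the simple observation that the commit statuses of $\sigma(1),\ldots,\sigma(j-1)$ depend only on the prefix of length $j-1$, which is identical in $\sigma$ and $\sigma'$; hence the set of committed predecessors available to block $\sigma(j+1)$ in $\sigma'$ is a subset of the one available in $\sigma$ (the extra candidate $\sigma(j)$ is dropped from the prefix). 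Thus the commit condition is only weakened by the swap, and the injection yields $|\{\sigma:\sigma(j{+}1)\text{ commits}\}|\le |\{\sigma:\sigma(j)\text{ commits}\}|$.

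Combining the three ingredients — the permutation reformulation, the swap-based monotonicity of $q_j$, and the running-average lemma — immediately gives $\bar r(m)\le \bar r(m+1)$, completing the proof. The only subtlety to be careful about is making sure the ``commit status'' of a node is defined consistently across the involution; once that bookkeeping is handled, the rest is essentially automatic.
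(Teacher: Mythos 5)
Your proof is correct, and once unwound it is in close correspondence with the paper's argument, but it establishes the key inequality by a genuinely different (and more rigorous) device. The paper proves the proposition via Lemma~\ref{lem:k-convex}: it sets $\bar k(m)=\E_{\pi_m}[k(\pi_m)]$, argues that $\bar k$ is non-decreasing and convex, and then combines convexity with $\bar k(1)=0$ to get $m\Delta_{\bar k}(m)\geq\bar k(m+1)$, whence $\Delta_{\bar r}(m)\geq 0$. Writing $\bar k(m)=\sum_{j=1}^m(1-q_j)$ shows the dictionary between the two proofs: $\Delta^2_{\bar k}(m)\geq 0$ is literally your $q_{m+2}\leq q_{m+1}$, and your running-average step (that $mq_{m+1}\leq\sum_{j=1}^m q_j$ for a non-increasing sequence) is the same algebra as the paper's final telescoping from $\bar k(1)=0$. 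Where you differ is in how monotonicity of the $q_j$ (equivalently, convexity of $\bar k$) is justified. The paper's justification is informal: it asserts that the $(m+2)$-th node conflicts at least as often as the $(m+1)$-th ``since it may add some edges,'' which is not really the heart of the matter --- the inequality already holds on a static graph simply because the later node has one more potential committed predecessor available to block it. Your swap involution on $S_n$ pins exactly this down: the length-$(j-1)$ prefix, and hence its set of committed nodes, is unchanged by the transposition, so the blockers seen at position $j$ in $\sigma'$ form a subset of those seen at position $j+1$ in $\sigma$, and the measure-preserving coupling yields $q_{j+1}\leq q_j$ outright. The only caveat is that your argument assumes the CC graph is static within the batch, whereas the paper's model nominally allows a committing node to add or alter edges (and its Lemma~\ref{lem:k-convex} gestures at this); if modifications can only add conflicts your coupling still goes through, and in any case the paper's own proof is no more careful on this point than yours.
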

To prove Prop.~\ref{thm:r-increasing} we first need a lemma:
\begin{lemma}\label{lem:k-convex}
  Let $\bar k(m) \deq \E_{\pi_m} \left[ k(\pi_m) \right]$. Then $\bar k$ is a
  non-decreasing convex function, i.e.\ $\Delta_{\bar k}(m)\geq 0$ and
  $\Delta^2_{\bar k}(m)\geq 0$.
\end{lemma}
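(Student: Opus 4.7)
The plan is to couple the prefixes $\pi_m$ for different values of $m$ by drawing a single uniformly random permutation $\pi$ of $V$ and taking $\pi_m$ to be its length-$m$ prefix (writing $\pi(i)$ for the $i$-th entry); this matches the distribution of $\pi_m$ defined in the paper. Let $C_m\subseteq\{\pi(1),\ldots,\pi(m)\}$ be the committed set after step $m$, i.e.\ the greedy maximal independent set produced by scanning $\pi(1),\ldots,\pi(m)$ in order, so that $k(\pi_m)=m-|C_m|$. Since the committed set only grows, $C_m\subseteq C_{m+1}$ pointwise in $\pi$, so aborts accumulate monotonically with $m$; passing to expectations immediately yields $\Delta_{\bar k}(m)\geq 0$.

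For convexity, I would rewrite the first discrete derivative in a symmetric form. Under the coupling, the unique new abort when going from $\pi_m$ to $\pi_{m+1}$ is $\pi(m+1)$ itself, and it aborts iff it has a neighbor in the current committed set. Writing $N(S)=\{v\in V : v \text{ is adjacent to some } u\in S\}$, this identity reads
\begin{equation*}
  \Delta_{\bar k}(m) \;=\; \Pb\bigl[\pi(m+1)\in N(C_m)\bigr]\mf
\end{equation*}
The crucial step is an exchangeability observation: conditional on $\pi(1),\ldots,\pi(m)$ (and hence on $C_m$), the remaining $n-m$ entries of $\pi$ are a uniformly random permutation of the unchosen nodes, so $\pi(m+1)$ and $\pi(m+2)$ share the same marginal distribution. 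Swapping them in the identity above preserves the probability and yields $\Delta_{\bar k}(m) = \Pb[\pi(m+2)\in N(C_m)]$. Comparing now with $\Delta_{\bar k}(m+1)=\Pb[\pi(m+2)\in N(C_{m+1})]$, the inclusion $C_m\subseteq C_{m+1}$ implies $N(C_m)\subseteq N(C_{m+1})$, whence $\Delta_{\bar k}(m)\leq\Delta_{\bar k}(m+1)$, i.e.\ $\Delta^2_{\bar k}(m)\geq 0$.

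The step I expect to demand the most care is pinning down the expression for $\Delta_{\bar k}(m)$: a node aborts against the committed set $C_m$ rather than against the full selected prefix, so the monotone object to feed into the exchangeability step must be the greedy MIS $C_m$, not the trivially monotone prefix $\{\pi(1),\ldots,\pi(m)\}$. Once this is recognized, both parts of the lemma drop out of the coupling, a one-step swap, and the monotonicity of the neighborhood operator $N(\cdot)$, with no combinatorial calculation required.
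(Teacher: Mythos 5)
Your proof is correct and follows essentially the same route as the paper's: both express $\Delta_{\bar k}(m)$ as the probability that the newly revealed node conflicts with the already-committed nodes, and compare consecutive differences by coupling the prefixes and using the exchangeability of the $(m+1)$-th and $(m+2)$-th draws together with the growth of the committed set. Your write-up is in fact more explicit than the paper's, which compresses the monotonicity of $N(C_m)$ into the informal remark that the $(m+1)$-th node ``may add some edges.''
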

\begin{proof}
  Let $\tilde{k}(\pi_m, i)$ be the expected number of conflicting nodes running
  $r=m+i$ nodes concurrently, the first $m$ of which are $\pi_m$ and the last
  $i$ are chosen uniformly at random among the remaining ones. By definition, we
  have
  \begin{equation}
    \E_{\pi_m}\left[\tilde{k}(\pi_m, i)\right] = \bar k(m+i) \mf
  \end{equation}
  In particular,
  \begin{equation}
    \tilde{k}(\pi_m, 1)= k(\pi_m) + \Pb\left[ (m+1)\text{-th conflicts} \right] \mc
  \end{equation}
  which brings
  \begin{equation}
    \bar k(m+1)=\E_{\pi_m}\left[\tilde{k}(\pi_m, 1)\right] = \bar k(m) + \eta \mc
  \end{equation}
  with $\eta= \bar k(m+1) - \bar k(m) = \Delta_{\bar k}(m) \geq 0$, hence
  proving the monotonicity of $\bar k$. Consider now
  \begin{equation}
    \tilde{k}(\pi_m, 2)= k(\pi_m) + \Pb\left[ (m+1)\text{-th conflicts}
    \right] + \Pb\left[ (m+2)\text{-th conflicts} \right] \mf
  \end{equation}
  If the $(m+1)$-th node does not add any edge, then we have
  \begin{equation}
    \Pb\left[(m+1)\text{-th conflicts} \right]=\Pb\left[ (m+2)\text{-th
        conflicts} \right] \mc
  \end{equation}
  but since it may add some edges the probability of conflicting the second time
  is in general larger and thus $\Delta^2_{\bar k}(m)\geq 0$.
\qed\end{proof}
\begin{proof}[Prop.~\ref{thm:r-increasing}]
  Since $\bar r(m)=\bar k(m)/m$, its finite difference can be written as
  \begin{equation}
    \Delta_{\bar r}(m)=\frac{m\Delta_{\bar k}(m)-\bar k(m)}{m(m+1)} \mf
  \end{equation}
  Because of Lemma~\ref{lem:k-convex} and being $\bar k(1)=0$ we have
  \begin{equation}
    \bar k(m+1) \leq m\Delta_{\bar k}(m) \mc
  \end{equation}
  which finally brings
  \begin{equation}
    \Delta_{\bar r}(m)=\frac{m\Delta_{\bar k}(m)-\bar k(m)}{m(m+1)} \geq
    \frac{\bar k(m+1)-\bar k(m)}{m(m+1)} =
    \frac{\Delta_{\bar k}(m)}{m(m+1)} \geq 0 \mf
  \end{equation}
\qed\end{proof}

\begin{proposition}\label{prop:r-init-derivat}
  Let $G$ be a CC graph, with $n$ nodes and average degree $d$, then the initial
  derivative of $\bar r$ depends only on $n$ and $d$ as
  \begin{equation}
    \Delta_{\bar r}(1)=\frac{d}{2(n-1)} \mf
  \end{equation}
\end{proposition}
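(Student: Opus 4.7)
The plan is to evaluate $\Delta_{\bar r}(1)$ directly from its definition. Since $\Delta_{\bar r}(1) = \bar r(2) - \bar r(1)$ and $\bar r(m) = \bar k(m)/m$, I first observe that $\bar r(1) = \bar k(1) = 0$, because a single active node can never conflict with itself. Hence $\Delta_{\bar r}(1) = \bar r(2) = \bar k(2)/2$, and the whole problem reduces to computing $\bar k(2)$.

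Next I would compute $\bar k(2)$ by direct counting. When two distinct nodes are chosen uniformly at random from $V$, a conflict occurs precisely when the two chosen nodes span an edge of $G$, and in that event exactly one of the pair aborts (so the number of aborts is $1$). Therefore $\bar k(2)$ equals the probability that a uniformly chosen unordered pair is an edge, namely $|E|/\binom{n}{2}$. Using $d = 2|E|/n$ (so $|E| = nd/2$), this simplifies to
\begin{equation}
    \bar k(2) \;=\; \frac{nd/2}{n(n-1)/2} \;=\; \frac{d}{n-1} \mf
\end{equation}

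Dividing by $2$ gives $\Delta_{\bar r}(1) = d/(2(n-1))$, as required. There is no real obstacle here: the only subtlety is noting that with $m=2$ the conflict indicator and the abort count coincide (both are $0$ or $1$), so no linearity-of-expectation argument over multiple edges is needed. The result is essentially a one-line computation once one unpacks the definitions, and it is the base case that higher-order derivative properties of $\bar r$ will later build on.
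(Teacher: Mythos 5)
Your proposal is correct and follows essentially the same route as the paper: both reduce $\Delta_{\bar r}(1)$ to $\bar k(2)/2$ using $\bar k(1)=0$, and then compute $\bar k(2)=d/(n-1)$ by an elementary counting argument (the paper conditions on the first chosen node to get $\E_v[d_v/(n-1)]$, while you count the probability that the chosen pair spans an edge, which is the same computation). No gaps.
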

\begin{proof}
  Since
  \begin{equation}
    \Delta_{\bar r}(1) = \frac{\Delta_{\bar k}(1)-\bar k(1)}{2}=\frac{\bar
      k(2)}{2} \mc
  \end{equation}
  we just need to obtain $\bar k(2)$. Let $\tilde k$ be defined as in the proof
  on Lemma~\ref{lem:k-convex} and $\pi_1=v$ a node chosen uniformly at
  random. Then
  \begin{equation}
    \bar k(2)=\E_v \left[\tilde k(v,1)\right]=\E_v\left[\frac{d_v}{n-1}\right]
    = \frac{\E_v\left[d_v\right]}{n-1}=\frac{d}{n-1} \mf
  \end{equation}
\qed\end{proof}

A measure of the available parallelism for a given CC graph has been identified
in~\cite{KulkarniBCP09} considering, at each temporal step, a maximal
independent set of the CC graph.
The expected size of a maximal independent set gives a reasonable and computable
estimate of the available parallelism.  However, this is not enough to predict the
actual amount of parallelism that a scheduler can exploit while keeping a low conflict
ratio, as shown in the following example.
\begin{example}Let $G=K_{n^2}\cup D_n$ where $K_{n^2}$ is the complete graph of
  size $n^2$ and $D_n$ a disconnected graph of size $n$ (i.e. $G$ is made up of
  a clique of size $n^2$ and $n$ disconnected nodes). For this graph every
  maximal independent set is maximum too and has size $n+1$, but if we choose
  $n+1$ nodes uniformly at random and then compute the conflicts we obtain that,
  on average, there are only 2 independent nodes.
\end{example}

A more realistic estimate of the performance of a scheduler can be obtained by
analyzing the CC graph sparsity.  The average degree of the CC graph is linked
to the expected size of a maximal independent set of the graph by the following
well known theorem (in the variant shown in~\cite{AlonS00} or~\cite{Tao06}):
\begin{theorem}{(Turán, strong formulation).}
  Let $G=(V,E)$ be a graph, $n=|V|$ and let $d$ be the average degree of
  $G$. Then the expected size of a maximal independent set, obtained choosing
  greedily the nodes from a random permutation, is at least $s=n/(d+1)$.
\end{theorem}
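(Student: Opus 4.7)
The plan is to use the probabilistic method, exploiting the fact that the greedy independent set produced from a random permutation has a very clean per-vertex inclusion probability. Let $\pi$ be a uniformly random permutation of $V$, and let $I(\pi)$ denote the independent set built greedily, where at each step we add the next vertex in $\pi$ to $I$ provided it has no neighbor already in $I$. The first observation to make is that the resulting set is exactly
\begin{equation}
I(\pi) = \{\, v \in V : v \text{ precedes every vertex of } N(v) \text{ in } \pi \,\},
\end{equation}
since a vertex $v$ is blocked from entering $I$ only if some neighbor of $v$ was scanned earlier (and that neighbor must itself be in $I$ by the same reasoning, inductively).

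Next, I would compute $\Pb[v \in I(\pi)]$. Because $\pi$ restricted to the set $\{v\}\cup N(v)$ (of size $d_v+1$) is a uniformly random ordering, the probability that $v$ is the first element in that restriction is exactly $1/(d_v+1)$. Linearity of expectation then gives
\begin{equation}
\E_\pi[\,|I(\pi)|\,] \;=\; \sum_{v\in V}\frac{1}{d_v+1}.
\end{equation}

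The final step is to lower-bound this sum by $n/(d+1)$. The function $f(x)=1/(x+1)$ is convex on $[0,\infty)$, so by Jensen's inequality applied to a vertex chosen uniformly at random,
\begin{equation}
\frac{1}{n}\sum_{v\in V}\frac{1}{d_v+1} \;\geq\; \frac{1}{\bigl(\frac{1}{n}\sum_{v\in V} d_v\bigr)+1} \;=\; \frac{1}{d+1},
\end{equation}
which rearranges to the claimed bound $\E_\pi[\,|I(\pi)|\,]\geq n/(d+1)$. Since the expected size is at least this, some permutation achieves at least $n/(d+1)$, but more importantly the statement is exactly about the expectation over a random permutation, so we are done.

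None of the steps is technically difficult; the main obstacle is simply packaging the observation correctly. The only subtle point worth checking is the first one: that $I(\pi)$ coincides with the ``local minimum'' set $\{v : v \text{ precedes all its neighbors}\}$. This is immediate by induction on the position in $\pi$, but it is the conceptual key that makes the per-vertex probability computation trivial; without it one would have to reason about conditional events that depend on the history of the scan.
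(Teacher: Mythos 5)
Your overall strategy is the standard proof of this statement and is also the engine behind the paper's own extension (Theorem~\ref{thm:tur-extension}): pass to the set of vertices that precede all of their neighbors in a random permutation, compute the per-vertex probability $1/(d_v+1)$, and finish with convexity of $x\mapsto 1/(x+1)$ via Jensen. Note that the paper does not reprove the classical statement itself (it cites it from the literature), so the comparison here is against the canonical argument, which your plan follows.

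There is, however, one step that fails as written: the claimed identity $I(\pi)=\{v\in V: v \text{ precedes every vertex of } N(v) \text{ in } \pi\}$ is false, and consequently so is the equality $\E_\pi[\,|I(\pi)|\,]=\sum_{v}1/(d_v+1)$. Only the inclusion $\supseteq$ holds. For a counterexample take the path $u-v-w$ with permutation $(u,v,w)$: greedy admits $u$, rejects $v$ (its neighbor $u$ is already in $I$), and then admits $w$, even though $w$ is preceded by its neighbor $v$; on this graph $\sum_v 1/(d_v+1)=4/3$ while $\E_\pi[\,|I(\pi)|\,]=5/3$. The flaw in your justification is the converse direction: a vertex preceded by a neighbor is not necessarily blocked, because that neighbor may itself have been rejected. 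Your parenthetical actually establishes exactly the direction you need --- if $v$ precedes all of $N(v)$ then nothing can block it, hence $v\in I(\pi)$ --- so the repair is immediate: replace the equality by
\begin{equation}
  \E_\pi\left[\,|I(\pi)|\,\right] \;\geq\; \sum_{v\in V}\Pb\left[\,v \text{ precedes all of } N(v)\,\right] \;=\; \sum_{v\in V}\frac{1}{d_v+1} \mc
\end{equation}
after which your Jensen step gives the stated bound unchanged. So the approach is sound and essentially the intended one; only the overclaimed set equality must be weakened to an inclusion.
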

\begin{remark}
  The previous bound is existentially tight: let $K^n_d$ be the graph made up of
  $s=n/(d+1)$ cliques of size $d+1$, then the average degree is $d$ and the size
  of every maximal (and maximum) independent set is exactly $s$. Furthermore,
  every other graph with the same number of nodes and edges has a bigger average
  maximal independent set.
\end{remark}

The study of the expected size of a maximal independent set in a given graph is
also known as the \emph{unfriendly seating
  problem}~\cite{FreedmanS62,FriedmanR64} and is particularly relevant in
statistical physics, where it is usually studied on mesh-like graphs
\cite{GeorgiouKK09}. The properties of the graph $K^n_d$ has suggested us the
formulation of an extension of the Turán's theorem. We prove that the graphs
$K^n_d$ provide a worst case (for a given degree $d$) for the generalization of
this problem obtained by focusing on maximal independent set of induced
subgraphs.  This allows, when given a target conflict ratio $\rho$, the
computation of a lower bound for the parallelism a scheduler can exploit.
\begin{theorem}\label{thm:tur-extension}
  Let $G$ be a graph with same nodes number and degree of $K^n_d$ and let
  $\es_m(G)$ be the expected size of a maximal independent set of the subgraph
  induced by a uniformly random choice of $m$ nodes in $G$, then
  \begin{equation}
    \es_m(G) \geq \es_m(K^n_d) \mf
  \end{equation}
\end{theorem}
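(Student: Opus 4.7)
The plan is to combine a simple independent-set lower bound (in the style of the strong form of Turán's theorem) with a Jensen-type convexity argument on the degree sequence.

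First I would take a uniform random permutation $\sigma$ of $V$, let $S$ be the first $m$ elements of $\sigma$, and let $I$ consist of those $v\in S$ that precede every neighbor of $v$ in $\sigma$. Every predecessor of $v$ in $\sigma$ lies in $S$ when $v\in S$, so $I$ is an independent set of $G[S]$ that is contained in the scheduler's greedy maximal independent set, whence $\es_m(G)\geq \E|I|$. Since the $d_v+1$ positions in $\sigma$ of the closed neighborhood of $v$ form a uniform $(d_v+1)$-subset of $[n]$, and the events ``$v$ occupies the minimum position'' and ``the minimum position is $\leq m$'' are independent,
\[
\Pb[v\in I] = \frac{1}{d_v+1}\left(1-\frac{\binom{n-m}{d_v+1}}{\binom{n}{d_v+1}}\right) =: h(d_v),
\]
so $\es_m(G)\geq \sum_v h(d_v)$. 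On $K^n_d$, a disjoint union of $s=n/(d+1)$ cliques of size $d+1$, the scheduler's MIS picks the earliest vertex per clique intersected by $S$ and coincides with $I$, yielding $\es_m(K^n_d) = s\bigl(1-\binom{n-d-1}{m}/\binom{n}{m}\bigr) = n\,h(d)$ after the identity $\binom{n-m}{d+1}/\binom{n}{d+1} = \binom{n-d-1}{m}/\binom{n}{m}$.

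It then suffices to show that $h$ is discretely convex on $\{0,\ldots,n-1\}$, for Jensen applied to the degree sequence (of mean $d$) would give $\sum_v h(d_v)\geq n\,h(d)=\es_m(K^n_d)$. Splitting by the value of the minimum, I rewrite
\[
h(x) = \frac{1}{n}\sum_{r=1}^m G_r(x), \qquad G_r(x) \deq \frac{\binom{n-r}{x}}{\binom{n-1}{x}},
\]
and from the recursion $G_r(x+1)=G_r(x)(n-r-x)/(n-1-x)$ a short computation yields the closed form
\[
\Delta^2_{G_r}(x) = \frac{(r-1)(r-2)\,G_r(x)}{(n-1-x)(n-2-x)} \geq 0,
\]
so each $G_r$ (trivially for $r\leq 2$, nontrivially for $r\geq 3$) and hence $h$ is discretely convex. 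The main obstacle I anticipate is verifying this at the boundary of the domain, where $G_r$ vanishes for $r\geq 3$ and $x$ near $n$ and the recursion degenerates; fortunately both sides of the $\Delta^2 G_r$ identity vanish consistently there, so the piecewise-linear extension of $h$ is genuinely convex on $[0,n-1]$ and Jensen applies cleanly at the integer value $d$.
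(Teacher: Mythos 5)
Your proof is correct and follows essentially the same route as the paper's: the same independent set ($\is_m$, the vertices with no earlier neighbour in the permutation), the same per-vertex probability decomposed over the position of $v$, and the same termwise Jensen argument on the degree sequence. The only real difference is cosmetic --- you certify convexity of each term $G_r(x)=\eta_{r-1}(x)/\eta_{r-1}(0)$ via a closed-form discrete second difference, where the paper proves convexity of $\eta_j$ by induction on its continuous derivatives (Lemma~\ref{lem:eta-convexity}).
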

To prove it we first need the following lemma.
\begin{lemma}\label{lem:eta-convexity}
  The function $ \eta_j(x)\deq\prod_{i=1}^j (n-i-x) $ is convex for $x\in[0, n-j]$.
\end{lemma}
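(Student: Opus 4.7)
The plan is to establish convexity by computing $\eta_j''(x)$ explicitly and exhibiting it as a manifestly non-negative quantity. First I would record a sign observation: on the interval $[0, n-j]$ every factor $(n-i-x)$ is non-negative, since $n - i - x \geq n - i - (n-j) = j - i \geq 0$ for $i \leq j$. Consequently $\eta_j(x) \geq 0$ throughout, with a single zero at the right endpoint $x = n-j$.

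Next, restricting to the open part $x \in [0, n-j)$ where all factors are strictly positive, I would apply logarithmic differentiation. Setting $f_i(x) = n - i - x$, so that $f_i' = -1$ and $f_i'' = 0$, I get $(\log \eta_j)' = -\sum_{i} 1/f_i$ and $(\log \eta_j)'' = -\sum_{i} 1/f_i^2$. Combining these via the identity $\eta_j''/\eta_j = ((\log \eta_j)')^2 + (\log \eta_j)''$, the ``diagonal'' contribution $\sum_i 1/f_i^2$ from squaring the first sum cancels exactly against $(\log \eta_j)''$, leaving only cross terms:
\begin{equation*}
  \eta_j''(x) \;=\; 2\,\eta_j(x) \sum_{1 \le i < k \le j} \frac{1}{(n-i-x)(n-k-x)} \;\geq\; 0,
\end{equation*}
since every factor on the right is non-negative on $[0, n-j)$.

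Finally I would extend convexity to the closed interval by continuity at $x = n-j$; alternatively, factoring $\eta_j(x) = (n-j-x)\,\eta_{j-1}(x)$ gives $\eta_j''(n-j) = -2\,\eta_{j-1}'(n-j)$, which is positive because $\eta_{j-1}$ is strictly positive and strictly decreasing at $x = n-j$ (its log-derivative there is $-\sum_{i<j} 1/(j-i) < 0$).

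I do not anticipate a genuine obstacle. The computation is elementary; the only points needing care are the cancellation of the diagonal terms in the logarithmic differentiation (a standard but easy-to-botch bookkeeping step) and the treatment of the right endpoint, where $\eta_j$ vanishes but $\eta_j''$ is still well-defined and positive.
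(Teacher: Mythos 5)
Your proof is correct, but it takes a genuinely different route from the paper's. The paper argues by induction on $j$: using the factorization $\eta_j(x)=\eta_{j-1}(x)\,(n-j-x)$ and the product rule, it propagates the three sign conditions $\eta_j\geq 0$, $\eta_j'\leq 0$, $\eta_j''\geq 0$ from $j-1$ to $j$, never computing the second derivative in closed form. You instead derive the explicit expression
\begin{equation*}
  \eta_j''(x)=2\,\eta_j(x)\sum_{1\leq i<k\leq j}\frac{1}{(n-i-x)(n-k-x)}
\end{equation*}
by logarithmic differentiation and read off non-negativity term by term; your bookkeeping for the cancellation of the diagonal terms is correct. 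Your approach buys an exact formula (and, incidentally, strict convexity on the interior for $j\geq 2$), at the cost of having to treat the right endpoint separately, since the logarithm is only available where $\eta_j>0$; your continuity argument, or the direct computation $\eta_j''(n-j)=-2\eta_{j-1}'(n-j)>0$, properly closes that gap. The paper's induction is slightly more elementary --- no division by potentially vanishing factors, so the endpoint needs no special handling --- and it also records the monotonicity $\eta_j'\leq 0$ along the way, which your argument does not need. Both proofs are valid.
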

\begin{proof}
  We prove by induction on $j$ that, for $x\in[0, n-j]$,
  \begin{align}
    \eta_j(x) &\geq 0 \mc & \eta'_j(x) &\leq 0 \mc & \eta''_j(x) &\geq 0  \mf
  \end{align}
  \paragraph{Base case} Let $\eta_0(x)=1$. The properties above are easily verified.
  \paragraph{Induction} Since $\eta_{j}(x)=\eta_{j-1}(x) (n-j-x)$, we obtain
  \begin{equation}
    \eta'_j(x) = -\eta_{j-1}(x) + (n-j-x)\eta'_{j-1}(x) \mc
  \end{equation}
  which is non-positive by inductive hypotheses. Similarly,
  \begin{equation}
    \eta''_j(x) = -2\eta'_{j-1}(x) + (n-j-x)\eta''_{j-1}(x)
  \end{equation}
  is non-negative.
\qed\end{proof}
\begin{proof}[Thm.~\ref{thm:tur-extension}]
  Consider a random permutation $\pi$ of the nodes of a generic graph $G$ that
  has the same number of nodes and edges of $K^n_d$. We assume the prefix of
  length $m$ of $\pi$ (i.e.\ $\pi(1), \ldots, \pi(m)$) forms the active nodes
  and focus on the following independent set $\is_m$ in the subgraph induced: a
  node $v$ is in $\is_m(G,\pi)$ if and only if it is in the first $m$ positions
  of $\pi$ and it has no neighbors preceding it. Let $b_m(G)$ be the expected size
  of $\is_m(G,\pi)$ averaged over all possible $\pi$'s (chosen uniformly):
  \begin{equation}
    b_m(G) \deq \E_\pi\left[\# \is_m(G,\pi)\right] \mf
  \end{equation}
  Since for construction $b_m(G)\leq\es_m(G)$ whereas
  $b_m(K^n_d)=\es_m(K^n_d)$, we just need to prove that $b_m(K^n_d)\leq b_m(G)$.
  Given a generic node $v$ of degree $d_v$ and a random permutation $\pi$, its
  probability to be in $\is_m(G,\pi)$ is
  \begin{equation}
    \Pb\left[v \in \is_m(G,\pi) \right] = \frac{1}{n} \sum_{j=1}^m \prod_{i=1}^{j-1}
    \frac{n-i-d_v}{n-i} \mf
  \end{equation}
  By the linearity of the expectation we can write $b$ as
  \begin{align}
    b_m(G) &= \frac{1}{n} \sum_{v=v_1}^{v_n} \sum_{j=1}^m \prod_{i=1}^{j-1} \frac{n-i-d_v}{n-i}
    = \E_v\left[\sum_{j=1}^m \prod_{i=1}^{j-1} \frac{n-i-d_v}{n-i}\right] \mc \\
    b_m(K^n_d) &= \sum_{j=1}^m \prod_{i=1}^{j-1} \frac{n-i-d}{n-i}
    = \sum_{j=1}^m \prod_{i=1}^{j-1} \frac{n-i-\E_v[d_v]}{n-i} \mf
  \end{align}
  To prove that $\es_m(G)\geq \es_m(K^n_d)$ is thus enough showing that
  \begin{equation}
    \forall j \quad
    \E_v\left[ \prod_{i=1}^{j} (n-i-d_v) \right] \geq  \prod_{i=1}^{j}
    \left(n-i-\E_v[d_v]\right) \mc
  \end{equation}
  which can be done applying Jensen's inequality~\cite{Jensen1906}, since in
  Lemma~\ref{lem:eta-convexity} we have proved the convexity of
  $\eta_{j}(x)\deq\prod_{i=1}^{j} (n-i-x)$.
\qed\end{proof}
\begin{corollary}
  \label{cor:op-worst}
  The worst case for a scheduler among the graphs with the same number of nodes and
  edges is obtained for the graph $K^n_d$ (for which we can analytically
  approximate the performance, as shown in \S\ref{sec:analysys-worst-case}).
\end{corollary}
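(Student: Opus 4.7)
The plan is to deduce Corollary~\ref{cor:op-worst} directly from the intermediate inequality $b_m(K^n_d) \leq b_m(G)$ that was established inside the proof of Theorem~\ref{thm:tur-extension}. First I would make precise what ``worst case for a scheduler'' means: according to Section~\ref{sec:contr-optim-goal}, the scheduler's performance on a CC graph is measured by the conflict ratio $\bar r(m) = \bar k(m)/m$, so the worst graph among those with fixed node count and edge count is one that maximizes $\bar r(m)$ for every $m$, equivalently one that minimizes the expected number $m - \bar k(m)$ of processes that actually commit.

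Next I would identify this committed count with $b_m(G)$. In the scheduler model, $m$ nodes are drawn uniformly at random and then committed in a uniformly random order $\pi_m$, each process aborting exactly when a previously-committed neighbor precedes it. The set of committers is therefore exactly the greedy independent set $\is_m(G,\pi)$ defined inside the proof of Theorem~\ref{thm:tur-extension} when $\pi$ is a uniform permutation of all $n$ nodes whose length-$m$ prefix supplies both the active set and its processing order. Averaging over $\pi$ gives $m - \bar k(m) = b_m(G)$, and hence $\bar r(m) = 1 - b_m(G)/m$.

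With this identification the corollary is immediate: the inequality $b_m(K^n_d) \leq b_m(G)$ already proved for Theorem~\ref{thm:tur-extension} yields $\bar r_{K^n_d}(m) \geq \bar r_G(m)$ for every $m$ and every $G$ with the same number of nodes and edges as $K^n_d$, which is precisely the claim. Recall also that fixing the edge count and fixing the average degree are equivalent at fixed $n$, so the hypothesis of Theorem~\ref{thm:tur-extension} is met under the phrasing of the corollary.

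The only delicate point, which I expect to be the main (though mild) obstacle, is verifying the equivalence between the scheduler's two-stage sampling (uniform $m$-subset of nodes, then uniform ordering on it) and the single-stage sampling used in Theorem~\ref{thm:tur-extension} (uniform permutation of all $n$ nodes, length-$m$ prefix). Both distributions are invariant under the symmetric group acting on vertices and induce the same joint law on the pair (active set, commit order), so the two quantities labelled $b_m$ coincide and no further work is required.
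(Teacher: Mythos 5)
Your overall strategy is sound and close in spirit to the paper's, but there is a concrete error in your identification step. The set $\is_m(G,\pi)$ defined inside the proof of Theorem~\ref{thm:tur-extension} is \emph{not} the greedy maximal independent set of committers: by its definition a node belongs to $\is_m(G,\pi)$ only if it has \emph{no} neighbor preceding it in $\pi$, whereas the scheduler lets a node commit whenever no \emph{committed} neighbor precedes it. These sets differ in general. On the path with vertices $a,b,c$ and edges $ab$, $bc$, taking $m=3$ and order $(a,b,c)$: $a$ commits, $b$ aborts, and $c$ commits because its only preceding neighbor $b$ aborted; so the committers are $\{a,c\}$ while $\is_3$ contains only $a$. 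Hence your claimed equality $m-\bar k(m)=b_m(G)$ is false; the correct relation is $m-\bar k(m)=\es_m(G)\geq b_m(G)$, with equality for $K^n_d$ (in a disjoint union of cliques the first active node of each clique commits and every later active node of that clique has it as a preceding neighbor, so ``no preceding neighbor'' and ``no preceding committed neighbor'' coincide). This is precisely the pair of facts $b_m(G)\leq\es_m(G)$ and $b_m(K^n_d)=\es_m(K^n_d)$ that the paper records inside the proof of Theorem~\ref{thm:tur-extension}, and you have flagged the wrong step as the delicate one.

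The error is repairable because the inequalities point the right way: $\bar r_G(m)=1-\es_m(G)/m\leq 1-b_m(G)/m\leq 1-b_m(K^n_d)/m=\bar r_{K^n_d}(m)$. Note, however, that the paper's own proof is shorter and bypasses $b_m$ entirely: it writes $\bar r(m)=1-\es_m(G)/m$ and invokes the \emph{statement} of Theorem~\ref{thm:tur-extension}, namely $\es_m(G)\geq\es_m(K^n_d)$, rather than the internal inequality $b_m(K^n_d)\leq b_m(G)$. Your closing remark on the equivalence of the two sampling schemes (uniform $m$-subset plus uniform order versus length-$m$ prefix of a uniform permutation) is correct and unproblematic.
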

\begin{proof}
  Since
  \begin{equation}
    \bar r(m) =\frac{m-\es_m(G)}{m} = 1 - \frac{1}{m} \es_m(G) \mc
  \end{equation}
  the thesis follows.
\qed\end{proof}
\subsection{Analysis of the Worst-Case Performance}
\label{sec:analysys-worst-case}
\begin{theorem}
  \label{thm:worst-case}
  Let $d$ be the average degree of $G=(V,E)$ with $n=|V|$ (for simplicity we
  assume $n/(d+1)\in\mathbb{N}$). The conflict ratio is bounded from above as
  \begin{equation}
    \bar r(m) \leq 1-\frac{n}{m(d+1)} \left( 1 - \prod_{i=1}^m \frac{n-d-i}{n+1-i}
    \right) \mf
  \end{equation}
\end{theorem}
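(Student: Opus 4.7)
The plan is to reduce to the worst case graph and then compute its conflict ratio explicitly. By Corollary~\ref{cor:op-worst}, among all graphs with $n$ nodes and average degree $d$, the quantity $\bar r(m)$ is maximized by $K^n_d$. Equivalently, since $\bar r(m) = 1 - \es_m(G)/m$, we have $\es_m(G) \geq \es_m(K^n_d)$, and hence it suffices to derive an exact closed form for $\es_m(K^n_d)$ and then substitute into $\bar r(m) = 1 - \es_m(K^n_d)/m$.

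Next I would exploit the block structure of the worst-case graph. Recall $K^n_d$ is a disjoint union of $s = n/(d+1)$ cliques $C_1,\ldots,C_s$, each of size $d{+}1$, with no edges between distinct cliques. Given any set $S$ of $m$ chosen nodes, the subgraph of $K^n_d$ induced by $S$ is itself a disjoint union of cliques (one per clique of $K^n_d$ that meets $S$). Every maximal (and in fact maximum) independent set in such a union contains exactly one representative from each nonempty clique, so its size equals the number $N_m(S)$ of cliques $C_j$ with $C_j \cap S \neq \emptyset$. Therefore $\es_m(K^n_d) = \E[N_m]$, where the expectation is over a uniformly random $m$-subset $S$ of the vertices.

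Then I would evaluate $\E[N_m]$ by linearity of expectation and a standard hypergeometric computation. For any fixed clique $C_j$, the probability that none of its $d{+}1$ vertices lies in $S$ is $\binom{n-d-1}{m}/\binom{n}{m}$, which telescopes as
\begin{equation}
  \frac{\binom{n-d-1}{m}}{\binom{n}{m}} = \prod_{i=1}^{m} \frac{n-d-i}{n+1-i} \mf
\end{equation}
Summing over the $s$ cliques yields
\begin{equation}
  \es_m(K^n_d) = \frac{n}{d+1}\left( 1 - \prod_{i=1}^{m} \frac{n-d-i}{n+1-i} \right) \mf
\end{equation}
Combining with $\bar r(m) \leq 1 - \es_m(K^n_d)/m$ gives exactly the stated upper bound.

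Since every step is either a direct appeal to Corollary~\ref{cor:op-worst} or an elementary combinatorial identity, I do not foresee any substantive obstacle; the only mildly delicate point is the observation that in $K^n_d$ a maximal independent set in any induced subgraph is completely determined by the pattern of cliques that are hit, which reduces the expected-IS computation to counting nonempty cliques.
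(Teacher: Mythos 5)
Your proposal is correct and follows essentially the same route as the paper: reduce to $K^n_d$ via Corollary~\ref{cor:op-worst}, observe that a maximal independent set in the induced subgraph picks exactly one node per hit clique, and compute the expected number of hit cliques by linearity of expectation together with the hypergeometric probability $\binom{n-d-1}{m}/\binom{n}{m}=\prod_{i=1}^m\frac{n-d-i}{n+1-i}$. Your write-up is in fact slightly more explicit than the paper's in justifying why $\es_m(K^n_d)$ equals the expected number of cliques meeting the sample.
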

\begin{proof}
  Let $s=n/(d+1)$ be the number of connected components in $K_d^n$.  Because of
  Thm.~\ref{thm:tur-extension} and Cor.~\ref{cor:op-worst} it suffices to show
  that
  \begin{equation}
    \es_m(K_d^n) = s \left( 1 - \prod_{i=1}^m \frac{n-d-i}{n+1-i} \right) \mf
  \end{equation}
  The probability for a connected component $k$ of $K_d^n$ not to be accessed when
  $m$ nodes are chosen is given by the following hypergeometric
  \begin{equation}
    \Pr[\text{$k$ not hit}] = \frac{\begin{pmatrix}  n-d-1 \\
        m  \end{pmatrix}\begin{pmatrix}  d+1 \\
        0  \end{pmatrix}}{\begin{pmatrix}  n \\ m  \end{pmatrix}}
    = \prod_{i=1}^m \frac{n-d-i}{n+1-i} \mf
  \end{equation}
  Let $X_k$ be a random variable that is 1 when component $k$ is hit and 0
  otherwise. We have that $\E[X_k]=1-\prod_{i=1}^m \frac{n-d-i}{n+1-i}$ and, by
  the linearity of the expectation, the average number of components accessed is
  \begin{equation}
    \E\left[\sum_{k=1}^s X_k\right]=\sum_{k=1}^s \E[X_k] = s \left( 1 - \prod_{i=1}^m
      \frac{n-d-i}{n+1-i} \right) \mf
  \end{equation}
\qed\end{proof}
\begin{corollary}
  \label{cor:approx-bound}
  When $n$ and $m$ increase the bound is well approximated by
  \begin{equation}
    \bar r(m) \leq 1 - \frac{n}{m(d+1)} \left[ 1 - \left(1 -
        \frac{m}{n}\right)^{d+1}\right] \mf
  \end{equation}
\end{corollary}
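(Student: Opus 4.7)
The plan is to start from the exact bound of Theorem~\ref{thm:worst-case} and show that the product $P_m \deq \prod_{i=1}^m \frac{n-d-i}{n+1-i}$ is well approximated by $(1-m/n)^{d+1}$ when $n,m$ are large (with $d \ll n$, which is the regime of interest for exploitable parallelism).

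First I would re-express $P_m$ by ``transposing'' the hypergeometric-type ratio. Writing out numerator and denominator as falling factorials gives
\begin{equation}
P_m = \frac{(n-d-1)!\,(n-m)!}{(n-d-m-1)!\,n!} = \frac{\binom{n-m}{d+1}}{\binom{n}{d+1}} \mc
\end{equation}
and the last expression, expanded as a product of $d+1$ factors, becomes
\begin{equation}
P_m = \prod_{j=0}^{d} \frac{n-m-j}{n-j} \mf
\end{equation}
This reindexing (from a product of $m$ factors over $i$ to a product of $d+1$ factors over $j$) is the key algebraic step and is what makes the target exponent $d+1$ appear naturally.

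Next, each factor has the form
\begin{equation}
\frac{n-m-j}{n-j} = 1 - \frac{m}{n-j} \mc
\end{equation}
which for $j=0,\ldots,d$ with $d \ll n$ differs from $1-m/n$ by a relative error of order $j/n$. Taking the product over $j$ therefore yields
\begin{equation}
P_m = \left(1 - \frac{m}{n}\right)^{d+1} \bigl(1 + O(d^2/n)\bigr) \mc
\end{equation}
so that $P_m \simeq (1-m/n)^{d+1}$ as $n$ grows. Substituting this approximation into the bound of Theorem~\ref{thm:worst-case} gives exactly the claimed expression.

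The only mildly delicate point is the approximation step: one should check that the $O(d^2/n)$ error in $P_m$ gets multiplied by $n/(m(d+1))$ in the final bound and is still vanishing in the target regime (essentially $d = o(\sqrt{n})$ and $m$ not too small). Since the corollary only claims that the bound is ``well approximated'' rather than giving a quantitative error estimate, no further care is needed; the core of the proof is really just the combinatorial rewriting that turns a product of $m$ factors into a product of $d+1$ factors, each close to $1-m/n$.
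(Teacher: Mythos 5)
Your proof is correct, but it takes a different route from the paper. The paper's own proof is a one-line sketch: apply Stirling's approximation to the binomial coefficients in the hypergeometric expression $\binom{n-d-1}{m}/\binom{n}{m}$ from Theorem~\ref{thm:worst-case} and then drop low-order terms. You instead avoid Stirling entirely: you use the exact hypergeometric symmetry
\begin{equation}
  \prod_{i=1}^m \frac{n-d-i}{n+1-i} \;=\; \frac{\binom{n-d-1}{m}}{\binom{n}{m}} \;=\; \frac{\binom{n-m}{d+1}}{\binom{n}{d+1}} \;=\; \prod_{j=0}^{d}\frac{n-m-j}{n-j} \mc
\end{equation}
which is an identity, not an approximation, and it makes the exponent $d+1$ appear by pure algebra rather than by asymptotic cancellation; the only approximation left is replacing each of the $d+1$ factors $1-\tfrac{m}{n-j}$ by $1-\tfrac{m}{n}$. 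This is arguably cleaner and more transparent than the paper's sketch, and it localizes the error in a single, easily quantified step. One small imprecision: the relative error of each factor is $\tfrac{jm}{(n-j)(n-m)}$ rather than $O(j/n)$, so your $O(d^2/n)$ total really requires $m$ bounded away from $n$ (e.g.\ $m\leq cn$ with $c<1$); also, for $m>n-d-1$ the exact product vanishes while $(1-m/n)^{d+1}$ does not, so the approximation is only meaningful away from that extreme. Since the corollary claims only that the bound is ``well approximated'' and the regime of interest has $m$ comparable to $n/(d+1)\ll n$, these caveats do not affect the validity of your argument.
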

\begin{proof}
  Stirling approximation for the binomial, followed by low order terms deletion
  in the resulting formula.
\qed\end{proof}
\begin{corollary}
  \label{cor:approx-bound-2}
  If we set $m=\alpha s= \frac{\alpha n}{d+1}$ we obtain
  \begin{align}
    \bar r(m) & \leq 1 - \frac{1}{\alpha} \left[ 1 - \left(1 -
        \frac{\alpha}{d+1}\right)^{d+1}\right]  \leq 1-\frac{1}{\alpha}\left[1 -
      e^{-\alpha}\right] \mf
  \end{align}
\end{corollary}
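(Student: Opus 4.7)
The plan is to deduce both inequalities directly from Corollary~\ref{cor:approx-bound}. The first step is a purely mechanical substitution: setting $m = \alpha n/(d+1)$ in the bound of Corollary~\ref{cor:approx-bound} gives $n/(m(d+1)) = 1/\alpha$ and $m/n = \alpha/(d+1)$, so the right-hand side becomes $1 - \frac{1}{\alpha}\bigl[1 - (1-\alpha/(d+1))^{d+1}\bigr]$, which is the first inequality of the corollary.

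For the second inequality I would invoke the elementary real-analysis fact $1+y \leq e^y$ for every $y\in\reals$. Applying it to $y=-\alpha/(d+1)$ yields $1-\alpha/(d+1) \leq e^{-\alpha/(d+1)}$, and since in the regime of interest $m\leq n$, i.e.\ $\alpha \leq d+1$, the base $1-\alpha/(d+1)$ is nonnegative, so raising both sides to the $(d+1)$-th power preserves the inequality and gives $\bigl(1-\alpha/(d+1)\bigr)^{d+1}\leq e^{-\alpha}$.

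Finally, I would propagate this scalar inequality through the envelope $1 - \frac{1}{\alpha}\bigl[1 - (\cdot)\bigr]$, paying attention to the direction: since $(1-\alpha/(d+1))^{d+1} \leq e^{-\alpha}$, the bracketed factor $1-(1-\alpha/(d+1))^{d+1}$ is \emph{at least} $1-e^{-\alpha}$, and multiplication by the negative coefficient $-1/\alpha$ flips the sense of the inequality, yielding the claimed weaker upper bound $1-\frac{1}{\alpha}\bigl[1-e^{-\alpha}\bigr]$.

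There is no real obstacle here — the result is a cosmetic rewriting of Corollary~\ref{cor:approx-bound} in terms of the load parameter $\alpha=m/s$. The only point requiring a moment of care is tracking the sign when weakening $(1-\alpha/(d+1))^{d+1}$ into $e^{-\alpha}$, so that the replacement produces a valid upper bound rather than a lower one.
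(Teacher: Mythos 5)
Your proposal is correct, and it is exactly the intended derivation: the paper gives no explicit proof for this corollary, treating it as an immediate substitution of $m=\alpha n/(d+1)$ into Corollary~\ref{cor:approx-bound} followed by the standard bound $1+y\leq e^y$. Your care with the sign flip when weakening $\left(1-\frac{\alpha}{d+1}\right)^{d+1}$ to $e^{-\alpha}$ inside the bracket is the only nontrivial point, and you handle it correctly.
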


\section{Controlling Processors Allocation}
\label{sec:contr-parall}

In this section we will design an efficient control heuristic that dynamically
chooses the number of processes to be run by a scheduler, in order to obtain high
parallelism while keeping the conflict ratio low.
In the following we suppose that the properties of $G_t$ are varying slowly
compared to the convergence of $m_t$ toward $\mu_t$ under the algorithm we will
develop (see \S\ref{sec:algorithm-evaluation}), so we can consider $G_t=G$ and
$\mu_t=\mu$ and thus our goal is making $m_t$ converge to $\mu$.

Since the conflict ratio is a non-decreasing function of the number of launched
tasks $m$ (Prop.~\ref{thm:r-increasing}) we could find $m\simeq\mu$ by bisection
simply noticing that
\begin{align}
   & \bar r(m') \leq  \rho \leq \bar r(m'')  & \Rightarrow&   & m' \leq \mu \leq m'' \mf
\end{align}
The control we propose is slightly more complex and is based on recurrence
relations, i.e., we compute $m_{t+1}$ as a function $F$ of the target conflict
ratio $\rho$ and of the parameters which characterize the system at the previous
timestep:
\begin{equation}
  m^F_{t+1}=F\left(\rho,r_t,m_t\right) \mf
\end{equation}
The initial value $m_0$ for a recurrence can be chosen to be 2 but, if we have
an estimate of the CC graph average degree $d$, we can choose a smarter value:
in fact applying Cor.~\ref{cor:approx-bound-2} we are sure that using, e.g.,
$m=\frac{n}{2(d+1)}$ processors we will have at most a conflict ratio of
$21.3\%$.

\begin{algorithm}
  \small
  \tcp{Tunable parameters}
  $m_0 = 2;\qquad$
  $m_{\max} = 1024;\qquad$
  $m_{\min} = 2$\;

  $T = 4;\qquad$
  $r_{\min} = 3\%;\qquad$
  $\alpha_0 = 25\%;\qquad$
  $\alpha_1 = 6\%$\;

  \tcp{Variables}
  $m \leftarrow m_0;\qquad$
  $r \leftarrow 0;\qquad$
  $t \leftarrow 0$\;

  \tcp{Main loop}
  \While{nodes to elaborate $\not = 0$}{
    $t \leftarrow t+1$\;

    \lIf{$m>m_{\max}$}{$m\leftarrow m_{\max}$\;}
    \lElseIf{$m<m_{\min}$}{$m\leftarrow m_{\min}$\;}

    \textrm{Launch the scheduler with $m$ nodes}\;
    $r \leftarrow  r + \textrm{new conflict ratio}$\;

    \If{$(t \bmod T)=T-1$}{
      $r \leftarrow r/T$\;
      $\displaystyle\alpha \leftarrow \left| 1-\frac{r}{\rho}\right|$\;

      \uIf{$\alpha>\alpha_0$}{
        \lIf{$r<r_{\min}$}{$r\leftarrow r_{\min}$\;}
        $\displaystyle m \leftarrow \left\lceil \frac{\rho}{r} \, m \right\rceil $\;
      }
      \uElseIf{$\alpha>\alpha_1$}{
        $\displaystyle m \leftarrow \left\lceil ( 1-r+\rho) \, m \right\rceil $\;
      }
      $r \leftarrow 0;$

    }
  }
  \caption{Pseudo-code of the proposed hybrid control algorithm}
  \label{hybcode}
\end{algorithm}

Our control heuristic (Algorithm~\ref{hybcode}) is a hybridization of two simple
recurrences.  The first recurrence is quite natural and increases $m$ based on
the distance between $r$ and $\rho$:
\begin{align}
  \text{\bf Recurrence A:} \qquad m^A_{t+1}=(1-r_t+\rho) m_t \mf
\end{align}
The second recurrence exploits some experimental facts.  In Fig.~\ref{barr} we
have plotted the conflict ratio functions for three CC graphs with the same size
and average degree (note that initial derivative is the same for all the graphs,
in accordance with Prop.~\ref{prop:r-init-derivat}).  We see that conflict
ratios which reach a high value ($\bar r(n)>\frac{1}{2}$) are initially well
approximated by a straight line (for $m$ such that $\bar r(m)\leq \rho = 20\div
30 \%$), whereas functions that deviates from this behavior do not raise too
much. This suggests us to assume an initial linearity in controlling $m_t$, as
done by the following recurrence:
\begin{align}
  \text{\bf Recurrence B:} \qquad m^B_{t+1}=\frac{\rho}{r_t} m_t \mf
\end{align}

\begin{figure}
  \centering
  \includegraphics[width=10cm]{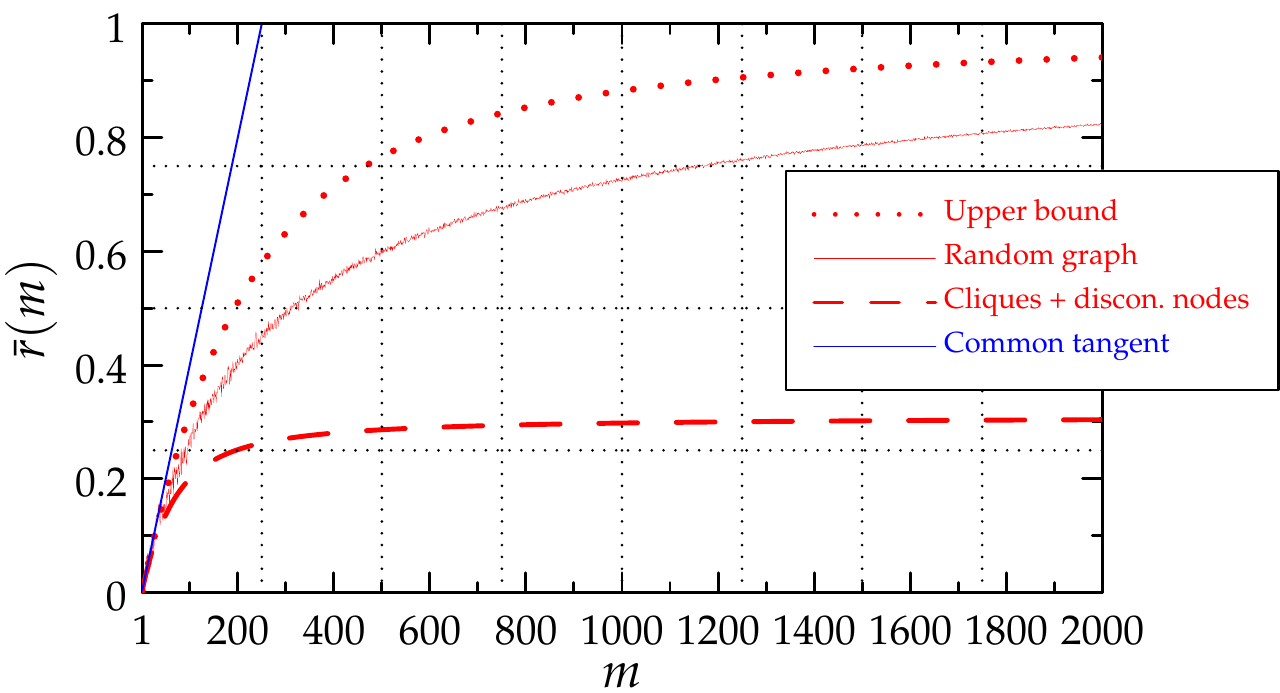}
  \caption{A plot of $\bar r(m)$ for some graphs with $n=2000$ and $d=16$: (i)
    the worst case upper bound of Cor.~\ref{cor:approx-bound} (ii) a random
    graph (edges chosen uniformly at random until desired degree is reached;
    data obtained by computer simulation) (iii) a graph unions of cliques and
    disconnected nodes.}
  \label{barr}
\end{figure}

The two recurrences can be roughly compared as follows (see Fig.~\ref{hyb}):
Recurrence A has a slower convergence than Recurrence B, but it is less
susceptible to noise (the variance that makes $r_t$ realizations different from
$\bar r_t$). This is the reason for which we chose to merge them in an hybrid
algorithm: initially, when the difference between $r$ and $\rho$ is big, we use
Recurrence~B to exploit its quick convergence and then Recurrence~A is adopted,
for a finer tuning of the control.

\begin{figure}
  \centering
  \includegraphics[width=10cm]{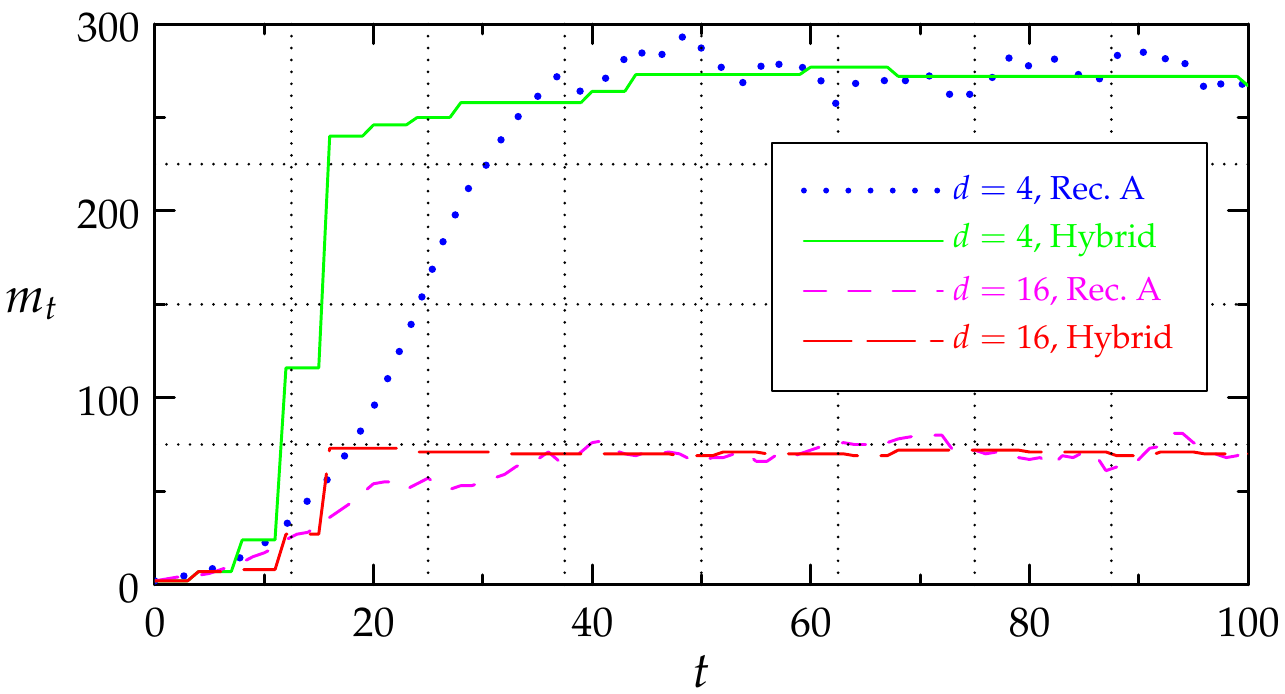}
  \caption{Comparison between two realizations of the hybrid algorithm and one
    that only uses Recurrence~A, for two different random graphs ($n=2000$ in
    both cases). The hybrid version has different parameters for $m$ greater or
    smaller than 20. $\rho$ was chosen to be $20\%$. The proposed algorithm
    proves to be both quick in convergence and stable.}
  \label{hyb}
\end{figure}

\subsection{Experimental Evaluation}
\label{sec:algorithm-evaluation}

In the practical implementation of the control algorithm we have made the
following optimizations:
\begin{compactitem}
\item Since $r_t$ can have a big variance, especially when $m$ is small, we
  decided to apply the changes to $m$ every $T$ steps, using the averaged values
  obtained in these intervals, to smooth the oscillations.
\item To further reduce the oscillations we apply a change only if the observed
  $r_t$ is sufficiently different from $\rho$ (e.g. more than $6\%$), thus
  avoiding small variations in the steady state, which interfere with locality
  exploitation because of the data moving from one processor to another.
\item Another problem that must be considered is that for small values of $m$
  the variance is much bigger, so it is better to tune separately this case
  using different parameters (this optimization is not shown in the
  pseudo-code).
\end{compactitem}

To validate our controller we have run the following simulation: a random CC
graph of fixed average degree $d$ is taken and the controller runs on it,
starting with $m_0=2$. We are interested in seeing how many temporal steps it
takes to converge to $m_t\simeq \mu$.
As can be seen in~\cite{KulkarniBCP09} the parallelism profile of many practical
applications can vary quite abruptly, e.g., Delauney mesh refinement can go from
no parallelism to one thousand possible parallel tasks in just 30 temporal
steps. Therefore, an algorithm that wants to efficiently control the processors
allocations for these problems must adapt very quickly to changes in the
available parallelism. Our controller, that uses the very fast Recurrence~B in
the initial phase, proves to do a fast enough job: as shown in Fig.~\ref{hyb}
in about 15 steps the controller converges close to the desired $\mu$ value.

\section{Conclusions and Future Work}
\label{sec:concl-future-work}

Automatic parallelization of irregular algorithms is a rich and complex subject
and will offer many difficult challenges to researchers in the next future.
In this paper we have focused on the processor allocation problem for unordered
data-amorphous algorithms; it would be extremely valuable to obtain similar
results for the more general and difficult case of \emph{ordered} algorithms
(e.g., discrete event simulation), in particular it is very hard to obtain good
estimates of the available parallelism for such algorithms, given the complex
dependencies arising between the concurrent tasks. Another aspect which needs
investigation, especially in the ordered context, is whether some statical
properties of the behavior of irregular algorithms can be modeled, extracted and
exploited to build better controllers, able to dynamically adapt to the
different execution phases.

As for a real-world implementation, the proposed control heuristic is now being
integrated in the Galois system and it will be evaluated on more realistic
workloads.

\section*{Acknowledgments}
\label{sec:acknowledgments}

We express our gratitude to Gianfranco Bilardi for the valuable feedback on
recurrence-based controllers and to all the Galois project members for the
useful discussions on optimistic parallelization modeling.


\bibliographystyle{splncs03}
\bibliography{procontrol}

\end{document}